\documentclass[a4paper,fleqn]{cas-dc}
\usepackage[authoryear]{natbib}
\newtheorem{theorem}{Theorem}
\newtheorem{definition}{Definition}
\newtheorem{lemma}{Lemma}
\newdefinition{remark}{Remark}
\newproof{proof}{Proof}
\newproof{pot}{Proof of Theorem \ref{thm}}

\usepackage[commandnameprefix=ifneeded]{changes}
\usepackage{graphics} 
\usepackage{hyperref} 
\usepackage{mathrsfs}

\usepackage{amsmath,amssymb,amsfonts}
\usepackage{algorithmic}
\usepackage{graphicx}
\usepackage{algorithm,algorithmic}
\usepackage{siunitx}

\DeclareMathOperator{\interior}{int}

\usepackage{tikz}
\usepackage{pgfplots}
\pgfplotsset{compat=newest}
\usetikzlibrary{external}
\newlength\figH
\newlength\figW
\newcommand{\CS}{\mathcal{X}}       
\newcommand{\TS}{\mathcal{X}_T}       
\newcommand{\RA}[2]{R_{[#1]}\left(#2\right)} 
\newcommand{\uu}{\mathbf{u}}
\newcommand{\ww}{\mathbf w}

\newcommand{\Uset}{\mathcal{U}}
\newcommand{\Tbdl}[1]{\mathcal{T}_{#1}}

\newcommand{\Feas}{\mathcal{R}}

\newcommand{\clf}[1][\empty]{\ifx#1\empty V \else V(#1)\fi}
\newcommand{\cbf}[1][\empty]{\ifx#1\empty h \else h(#1)\fi}

\newcommand{\Vap}{\hat V}
\newcommand{\Rap}{\hat{\mathcal{R}}}

\newcommand{\uuinfhat}{\hat{\kappa}_\partial}

\usepackage{changes}
\definechangesauthor[color=purple]{TC}
\definechangesauthor[color=blue]{JO}
\definechangesauthor[color=olive]{JO2}

\def\tsc#1{\csdef{#1}{\textsc{\lowercase{#1}}\xspace}}
\tsc{WGM}
\tsc{QE}
\tsc{EP}
\tsc{PMS}
\tsc{BEC}
\tsc{DE}

\begin{document}
\let\WriteBookmarks\relax
\def\floatpagepagefraction{1}
\def\textpagefraction{.001}
\shorttitle{}
\shortauthors{J. Olucak and T. Cunis}

\title [mode = title]{Input-to-state Stable Approximate Nonlinear Model Predictive Control with Realtime
Feasibility}  
       
\tnotemark[1] 

\tnotetext[1]{This research is partially supported by the Ministry of Science, Research and Arts of the state of Baden-Württemberg under funding number MWK32-7531-49/13/7 for the project DaSO: Data-driven Spacecraft Operations.} 

%

\author[1]{Jan Olucak}[orcid=0009-0008-2210-8692]

\cormark[1]


\ead{jan.olucak@ifr.uni-stuttgart.de}

\ead[url]{}

\credit{Data curation, Investigation, Visualization, Conceptualization, Software, Writing – original draft}

\affiliation[1]{organization={Institute of Flight Mechanics and Controls, University of Stuttgart},
            addressline={Pfaffenwaldring 27}, 
            city={Stuttgart},
            postcode={70569}, 
            state={Baden-Württemberg},
            country={Germany}}

\author[1]{Torbjørn Cunis}[orcid=0000-0003-1672-4321]


\ead{tcunis@ifr.uni-stuttgart.de}

\ead[url]{}

\credit{Conceptualization,  Methodology, Supervision, Writing – review \& editing}


\cortext[1]{Corresponding author}



\begin{abstract}
In this paper, a {computational{ly} lightweight} {approximate} robust nonlinear model predictive control (NMPC) law is proposed based on {a pair of} input-to-state control Lyapunov function and robust control barrier function. The result builds upon and augments a recently introduced nominal infinitesimal-horizon NMPC scheme which permits small-sized quadratic programs to compute the feedback law for nonlinear constraint systems {on embedded hardware} in real time. {Numerical experiments} for nonlinear constrained spacecraft control {and comparison} to other robust NMPC schemes from the literature {demonstrate the effectiveness of the proposed scheme.}
\end{abstract}



\begin{keywords}
Robust Model Predictive Control \sep Sum-of-squares Optimization \sep Input-to-state Stability \sep Optimization 
\end{keywords}

\maketitle

\section{Introduction}
Nonlinear model predictive control (NMPC) is a well-established control method for constrained (non)linear systems because of its inherent property of seeking an optimal solution while assuring state and control constraints, (complex) path constraints and providing (asymptotic) stability~\citep{grune2017}. 
Whereas the closed-loop nature of the nominal NMPC has some inherent robustness properties~\citep{limon2009,yu2014,allan2017,Raković2019}, these are very limited for small disturbances in the constraint case. If disturbances are not explicitly considered in the prediction model, this might lead to constraint violations and stability might not be ensured anymore. Robust NMPC (RMPC) techniques~\citep{bemporad1999,Raković2019} with explicit consideration of disturbances and realtime feasibility are still an active field of research. 

To address the issue of RMPC, various strategies have been discussed in the literature:
Min-max MPC~\citep{raimondo2009} approaches try to account for the worst case disturbance. It provides rigorous theoretical guarantees, but the resulting infinite-dimensional min-max optimization problem can be hardly solved. Approximations consider, for example, only a finite number of uncertainty realization~\citep{scokaert1998} or using scenario-based approaches to give probabilistic guarantees~\citep{calafiore2013,fagiano2015}. However, these approaches are computationally intractable for large number of samples and long horizons. 

Tube-based (N)MPC~\citep{mayne2005} is based on the idea of computing a tube or funnel around the nominal trajectory that ensures robust constraint satisfaction. Here, the constraints are tightened by robust invariant sets. Early approaches used two NMPC schemes~\citep{mayne2011} which is computationally {intractable}.
Another direction computes over-approximations of the forward reachable set for constraint tightening simultaneously to the actual NMPC problem~\citep{bravo2006}, which is again computationally too heavy. For this approach, \citep{schurmann2018} proposed to split the problem into the NMPC optimization and reachable set estimation, to reduce the computational demand. 
 For improved computational effort, offline schemes are often used \citep[see, e.g.,][]{yu2013,kohler2018,rakovic2023}, which have similar computational effort as nominal NMPC, but might be conservative due to fixed control laws. 
To reduce conservatism, approximations of the true reachable set using ellipsoidal sets have been proposed by \citep{zanelli2021}. The computational complexity is the same as the nominal NMPC while ensuring robustness. {Note that for large scale systems or with long prediction horizons, the computational effort of nominal NMPC {is often} already prohibitive for real-time applications.}
The approach of \citep{zanelli2021} can be combined with the well-known real-time iteration (RTI) scheme \citep{frey2024} to further improve the computational effort for real-time execution.
To  reduce conservatism of tube MPC, some approaches jointly optimize for a nominal trajectory and a (linear) feedback law. For instance, \citep{messerer2021} solves for the nominal trajectory and controller in an alternating fashion. 
A similar approach is system level synthesis to compute the exact reachable set \citep{anderson2019,sieber2022a} for linear systems and was extended to nonlinear systems with robust constraint satisfaction by \citep{leeman2025b}. The joint optimization can reduce conservatism and improves performance, but the resulting optimization problems are large and hence are a major obstacle for real-time implementation. To alleviate this problem, custom solvers \citep[e.g.,][]{leeman2024} for linear system can be used. Recently, the approach of \citep{leeman2024} was combined with sequential convex programming and an RTI scheme for both the outer sequential loop and inner convex program to tackle nonlinear systems~\citep{leeman2025a}.
While substantial improvement can be observed for tube MPC, the computational effort might be prohibitive for applications with large scale systems, long prediction horizons and limited computing resources, unless an RTI scheme such as {those given by} \citep{frey2024} or \citep{leeman2025a} is applied. Yet, even an RTI scheme might not be sufficient for computationally constrained applications.

{Another direction to allow real-time computations, are} computationally low demanding schemes that rely on precomputed sequences of invariant sets to reduce the prediction horizon to a single step for linear~\citep{angeli2002} and nonlinear systems \citep{limon2003}. Stability is enforced by an additional constraint that requires an auxiliary optimization to be solved before evaluating the MPC feedback law. {Yet, the underlying optimization problem is nonconvex, for which no runtime guarantees can be given. To tackle nonlinear control problems, small-sized quadratic programs (QPs) based on control barrier and control Lyapunov functions \citep[CBF/CLFs,][]{ames2017} are a popular alternative, i.e., constraint satisfaction and stabilization are provided by a convex optimization.}  A robust version of the CBF/CLF QP is provided by \citep{jankovic2018}.
Note, simultaneous stabilization and constraint satisfaction might be conflicting goals. To alleviate this problem in the CBF-CLF QP approach, a slack variable is typically introduced to trade-off between safety and stability, with more emphasis on the first.  

In this paper, we propose a {computationally} lightweight {yet} robust approximate NMPC {approach} that ensures robust constraint satisfaction and input-to-state stability with respect to external disturbances. {Relying on a single QP}, we fuse the efficient computations of convex optimization with the theoretical {guarantees} of robust MPC schemes.
The results build upon  our recently introduced continuous-time infinitesimal-horizon {nominal} NMPC scheme~\citep{olucak2025a}.  The main contributions of this paper are:
\begin{enumerate}
    \item [a.]{Real-time feasibility:}  An efficient QP formulation for the robust infinitesimal-horizon MPC is {proposed}, guaranteeing robust safety, input-to-state stability, and real-time feasibility.
    \item [b.]{Theoretical guarantees:} Sufficient conditions for {closed-loop robust constraint satisfaction and input-to-state stability are derived}.
    \item [c.]{Practical synthesis:} A nonlinear sum-of-squares optimization method for synthesis of {a safe,  robust, and input-to-state stable approximate MPC is provided}.
\end{enumerate}

We demonstrate our approach in numerical experiments for nonlinear constrained spacecraft control and compare it to other methods for robust control.

\section{Problem Statement}
\label{sec: ProblemStatement}
We consider the continuous-time nonlinear input-affine dynamic system
\begin{flalign}
    \label{eq:system}
  &  \dot x(t) = f(x(t),u(t),w(t)) = f_0(x) + g(x)u + p(x)w&
\end{flalign}
for all $t \geq 0$  with states $x(t) \in \mathbb R^n$ and inputs $u(t) \in \mathcal U$, where $\mathcal U \subset \mathbb R^m$ denotes the viable inputs and $w \in \mathcal{W} \subseteq \mathbb R^{ n_w}$ are additive disturbances, where $f_0: \mathbb R^n \rightarrow \mathbb R^n$, $g: \mathbb R^n \rightarrow \mathbb R^{n \times m}$, $p: \mathbb R^n \rightarrow \mathbb R^{n \times n_w}$.
We assume that $f: \mathbb R^n \times \mathcal{U} \times  \mathcal{W} \to \mathbb R^n$ is Lipschitz continuous, satisfies $f(0,0,0) = 0$, and $\mathcal U$ and  $\mathcal{W} $ are compact sets with $0 \in \interior \mathcal U$ and $0 \in \interior \mathcal W$. For any $t_1 \geq t_0 \geq 0$, we denote by $\mathscr U_{[t_0, t_1]}$ the set of Lebesgue-measurable functions $\uu: [t_0, t_1] \to \mathcal U$. Similar, denote 
$\mathscr W_{[t_0, t_1]}$ the set of Lebesgue-measurable functions $\ww: [t_0, t_1] \to \mathcal W$, i.e., the sequence of bounded yet unknown disturbances.
Let $\xi(t_1, \mathbf u, \ww, t_0, x_0)$ and $\phi(t_1, \kappa,\ww, t_0, x_0)$ be the solutions to the initial value problem of \eqref{eq:system} on $[t_0, t_1]$ with $x(t_0) = x_0 \in \mathbb R^n$, the first for any $\mathbf u \in \mathscr U_{[t_0, t_1]}$, and the second for any state feedback $\kappa:\mathbb{R}^n\rightarrow\Uset$, under disturbance $\mathbf w \in \mathscr W_{[t_0, t_1]}$. If only the nominal system is considered, we simply write $\xi(t_1, \mathbf u, 0, t_0, x_0)$ and $\phi(t_1, \kappa,0, t_0, x_0)$.

{The \emph{tangent cone} of a set $\mathcal A \subset \mathbb R^n$ at $\bar x \in \mathcal A$ is the set $\mathcal T_{\bar x} \mathcal A$ of all vectors $z \in \mathbb R^n$ for which there exists sequences $x_k \in \mathcal A$ and $h_k \in \mathbb R$ with $x_k \to \bar x$, $h_k \searrow 0$, and $(x_k - \bar x)/h_k \to z$,} {where $h_k \searrow 0$ refers to a sequence of strictly positive real numbers that converge to 0.} 

\subsection{Robust Model Predictive Control}
\label{sec:MPC}
Let $\CS, \TS \subset \mathbb R^n$ be the closed state and terminal constraint sets, respectively (with $0 \in \interior \TS$), and $T > 0$ be the prediction horizon. A common choice is to consider the worst case realization of the disturbance, which leads to a min-max MPC formulation, as outlined next. The MPC optimal value function for the state $x_k \in \CS$ is given by {
\begin{subequations}
    \label{eq:mpc-problem}
\begin{flalign}
    &V(x_k) = {} \min_{\mathbf u \in \mathscr U_{[0, T]}} \max_{\mathbf w \in \mathscr W_{[0, T]}} \mathscr J_{[0, T]}(\xi(\cdot, \mathbf u,\ww, 0, x_k), \mathbf u) \\
    &\text{s.t.}\quad \xi(T, \mathbf u, \ww, 0, x_k) \in \TS \\
    &\quad\quad \xi(t, \mathbf u, \ww, 0, x_k) \in \CS \quad \forall t \in [0, T] 
\end{flalign}
 with cost functional
\begin{flalign}
    \label{eq:mpc-cost}
    &\mathscr J_{[0, T]}: (\mathbf x, \mathbf u) \mapsto F(\mathbf x(T)) + \int_0^T L(\mathbf x(t), \mathbf u(t)) \,\mathrm d t&
\end{flalign}
\end{subequations}}
and continuous functions $F: \TS \to \mathbb R_{\geq 0}$ and $L: \CS \times \mathcal U \to \mathbb R_{\geq 0}$ (both positive definite with respect to their respective origins). 
A common choice is 
\begin{flalign*}
    &L(x,u) = x^\top Q x + u^\top R u, 
    \qquad 
    F(x) = x^\top S x&
\end{flalign*}
with positive definite matrices $Q, S \in \mathbb R^{n \times n}$ and $R \in \mathbb R^{m \times m}$.
 {For a given sampling period $\delta < T$, denote $t_{k+1} = t_k + \delta$. For an initial condition $x_0 \in \CS$, we define the MPC feedback law
\begin{flalign}
    \label{eq:mpc-feedback}
    &u_\text{MPC}(t) = \mathbf u_{x_k}^\star(t-t_k), \quad t \in [t_k, t_{k+1}),&
\end{flalign}
where $\mathbf u_{x_k}^\star$ is the optimal solution for $x_k$, and
\begin{flalign*}
    &x_{k+1} = \xi(t_{k+1}, u_\text{MPC}(\cdot), w_k, t_k, x_k)&
\end{flalign*}
is the next sample. In the absence of disturbances in \eqref{eq:mpc-problem}, {Eq.}~\eqref{eq:mpc-feedback} reduces to a nominal NMPC feedback.}

To analyze the nonlinear system \eqref{eq:system} under MPC, we first connect the properties of the system with the feasibility of the optimal control problem  \eqref{eq:mpc-problem}. 
Given a state set $\CS \subseteq \mathbb R^n$, a terminal set $\TS \subseteq \CS$, and an interval $[t_0, t_1] \subset \mathbb R_{\geq 0}$, the robust reach-avoid set is
\begin{multline}
       \label{eq:sys-reachavoid}
        \RA{t_0,t_1}{\CS,\TS} = \{x_0 \in \CS ~|~ \exists \mathbf u \in \mathscr U_{[t_0,t_1]}, \forall \ww  \in \mathscr W_{[t_0, t_1]},  \\ \, \forall t \in [t_0,t_1] \,,
        \xi(t,\mathbf u,\ww,t_0,x_0) \in \CS, \xi(t_1,\mathbf u,\ww,t_0,x_0) \in \TS \}.
\end{multline}
For the MPC problem \eqref{eq:mpc-problem} with prediction horizon $T$ and initial condition $x \in \CS$, the feasible set is
    \begin{multline}
        \label{eq:mpc-feasible}
        \mathscr F(x) = \{ \mathbf u \in \mathscr U_{[0,T]} ~|~\forall \ww \in \mathscr W_{[0, T]}, \forall t \in [0,T] \\\xi(t,\mathbf u,\ww,0,x) \in \CS, \xi(T,\mathbf u,\ww,0,x) \in \TS\} ,
    \end{multline}
    and the set of feasible initial conditions is $\Feas = \{ x \in \CS ~|~ \mathscr F(x) \neq \varnothing \}$.
The following Lemma gives a relationship between the feasible set and the robust reach-avoid-set and was originally proven for the nominal case  in~\citep{Cunis2021}.
\begin{lemma}
    The set of feasible initial conditions $\Feas$ for \eqref{eq:mpc-problem} equals the robust reach-avoid set \eqref{eq:sys-reachavoid} with $t_0 = 0$, $t_1 = T$, i.e.,
        $\Feas = \RA{0,T}{\CS,\TS}$.
\end{lemma}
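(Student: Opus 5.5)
The plan is to prove the set equality by double inclusion, unfolding the two definitions and observing that they describe the same condition. Both $\Feas$ and $\RA{0,T}{\CS,\TS}$ are subsets of $\CS$. Membership in each amounts to the existence of an input $\mathbf u \in \mathscr U_{[0,T]}$ such that, for every disturbance $\ww \in \mathscr W_{[0,T]}$, the trajectory $\xi(\cdot,\mathbf u,\ww,0,x)$ stays in $\CS$ on $[0,T]$ and ends in $\TS$ at time $T$. The argument therefore reduces to showing that the quantifier structures in \eqref{eq:mpc-feasible} and \eqref{eq:sys-reachavoid} coincide when $t_0=0$ and $t_1=T$.

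For $\Feas \subseteq \RA{0,T}{\CS,\TS}$, take $x \in \Feas$. By definition $x \in \CS$ and $\mathscr F(x) \neq \varnothing$, so we may pick some $\mathbf u \in \mathscr F(x)$. By \eqref{eq:mpc-feasible}, this $\mathbf u$ satisfies, for all $\ww \in \mathscr W_{[0,T]}$ and all $t\in[0,T]$, both $\xi(t,\mathbf u,\ww,0,x)\in\CS$ and $\xi(T,\mathbf u,\ww,0,x)\in\TS$. This is exactly the witness that \eqref{eq:sys-reachavoid} requires, so $x \in \RA{0,T}{\CS,\TS}$.

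For the reverse inclusion, take $x \in \RA{0,T}{\CS,\TS}$. Then $x\in\CS$, and there is an input $\mathbf u$ with the stated robust property. Such a $\mathbf u$ lies in $\mathscr F(x)$ by definition, so $\mathscr F(x)\neq\varnothing$ and hence $x\in\Feas$.

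There is no substantive obstacle here. The only points to check are formal ones. First, the solutions $\xi$ must be well defined on $[0,T]$ for every admissible pair $(\mathbf u,\ww)$, which follows from the Lipschitz assumption on $f$ together with compactness of $\Uset$ and $\mathcal W$. Second, the order of quantifiers must be the same in both definitions: $\exists\,\mathbf u$ before $\forall\,\ww$. This is the open-loop min-max structure of \eqref{eq:mpc-problem}, and it matches \eqref{eq:sys-reachavoid}. Third, feasibility of \eqref{eq:mpc-problem} refers only to its constraint set and does not depend on whether the max in the cost is attained. With these checks in place, the argument reproduces the nominal proof of \citep{Cunis2021}, with the universal quantifier over $\ww$ carried through unchanged.
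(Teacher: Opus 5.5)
Your double-inclusion argument is correct and matches the paper, which defers to the nominal proof of \citep{Cunis2021}; that proof is the same unfolding of the definitions of $\mathscr F(x)$ and $\RA{0,T}{\CS,\TS}$, here with the universal quantifier over $\ww$ placed after the existential one over $\mathbf u$. Your well-posedness check is harmless but not needed, since both sets are defined by the identical predicate on $\xi$.
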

\begin{proof}
    The proof is analogous to the nominal case as provided by \citep{Cunis2021}.
\end{proof}

{Next, we introduce the notion of input-to-state stability of \citep{sontag2008} and terminal conditions for input-to-state practical stability following \citep{raimondo2009} for min-max MPC.}

\subsection{Input-to-State Stability}
ISS provides a general framework to formulate different notions of stability with respect to inputs (e.g. disturbances). We start with the standard definition~\citep[Section~2.9]{sontag2008}
\begin{definition}
A general nonlinear system
\begin{flalign}
    &\dot x = f(x,w)&
    \label{eq: nonLinSys}
\end{flalign}
is {\em input-to-state stable} for the {(disturbance)} input $w$ if {and only if} there exists a $\mathcal{K}\mathcal{L}$-function $\beta: \mathbb R_{\geq 0}\times \mathbb R_{\geq 0} \rightarrow \mathbb  R_{\geq 0}$ and a $\mathcal{K}$-function $\gamma:  \mathbb R_{\geq 0} \rightarrow \mathbb R_{\geq 0}$ such that for each input signal $\mathbf w$ and each initial condition {$x_0$} it holds that
\begin{flalign}
    &\|\psi(t,\mathbf w,x_0)\| \leq \beta(\|x_0\|,t) + \gamma(\|\mathbf w\|_\infty)&
\end{flalign}
where $\|\cdot\|$ and $\|\cdot\|_\infty$ denote the Euclidean and infinity norm, respectively, and {$\psi(t,\mathbf w,x_0)$ denotes the solution to the initial value problem of \eqref{eq: nonLinSys} on $[0, t]$ with initial condition $x(t_0) = x_0$ under disturbance $\mathbf w$}.
\end{definition}

An ISS system is asymptotically stable if the input is absent or vanishing. Otherwise, if the input is bounded, then the system is ultimately bounded in a set that depends on the input bound. There exist various equivalence {conditions} for a  system to be ISS~\citep{sontag2008}. 

One of them, the notion of {an ISS Lyapunov function, can be extended to systems with control and disturbance inputs:}
A smooth function $V:\mathbb R^n \rightarrow \mathbb R$ is called an {\em ISS control Lyapunov function} (ISS-CLF) for system~\eqref{eq:system} if there exist $\alpha_1, \alpha_2 \in \mathcal{K}_\infty$   such that
\begin{flalign}
    & \alpha_1(\|x\|) \leq V(x) \leq \alpha_2(\|x\|) \label{eq: ISS_CLF1}&
\end{flalign} 
holds {for all $x \in \mathbb R^n$},
and $\alpha, \chi \in \mathcal{K}_\infty$ such that 
\begin{flalign}
  &   \exists u \in \mathcal U, \quad \nabla V(x) f(x,u,w) \leq - \alpha(\|x\|) + \chi(\|w\|)\label{eq: ISS_CLF2}&
\end{flalign} 
holds {for any $x \in \mathbb R^n$ and any $w \in \mathcal{W} \subset \mathbb R^{n_w}$}. Note that, for a given feedback law $u = \kappa(x)$ that meets these conditions, $V$ becomes an ISS-Lyapunov  function for the closed-loop system {given by $f_\kappa: (x, w) \mapsto f(x, \kappa(x), w)$}.

\subsection{Terminal Conditions}
{The terminal constraint set $\mathcal X_T$ together with the terminal penalty $F$ in \eqref{eq:mpc-problem} allow to establish robust stability of the closed-loop dynamics, provided the pair $(F, \mathcal X_T)$ satisfies a suitable terminal condition.}
Since a local control law {$u = \kappa(x)$} is used, we {introduce} $f_\kappa: (x,w) \mapsto f(x,\kappa(x), w)$.

\begin{definition}
    \label{def:terminal-robust}
    {A pair $(P, \Omega)$ with $0 \in \interior \Omega$, $\Omega\subseteq\CS$, and $P: \Omega \to \mathbb R$ satisfies the {\em robust stabilizing terminal condition} for system \eqref{eq:system} if {and only if}} 
    there exists a state feedback $\kappa: \Omega \to \mathcal U$ {along with functions $\{\alpha_1, \alpha_2,\alpha, \sigma \}\in \mathcal{K}_\infty$} such that
\begin{subequations}
        \label{eq:terminalRobust}
    \begin{flalign}
        \alpha_1(\|x\|) &\leq P(x) \leq \alpha_2(\|x\|)         \label{eq:ISS1} \\ 
        \nabla P(x) f_\kappa(x,w) &\leq -L(x, \kappa(x))  - \alpha(\|x\|) + \sigma(\|w\|) \label{eq:ISS2}\\
        f_\kappa(x,w) &\in\Tbdl{x}\Omega\quad 
        \label{eq:terminal-se$T$-robust}
    \end{flalign}
    \end{subequations}
    for all $x \in \Omega$ and $w \in \mathcal{W}$, where $\Tbdl{x}\Omega$ is the tangent {cone} of $\Omega$ at $x\in\Omega$.
\end{definition}

Note that, in the absence of disturbances, these conditions reduce to the nominal terminal conditions~\citep[Assumption 5.9]{grune2017}.
According to \citep[p.~11]{raimondo2009}, the closed-loop system {$\dot x = f_\kappa(x,w)$} is ISS in {the terminal set $\mathcal X_T$ if $(F, \mathcal X_T)$ satisfies the robust terminal condition in Definition~\ref{def:terminal-robust}}. However, for the closed-loop min-max MPC {given by} \eqref{eq:mpc-problem} under state-independent uncertainties, only input-to-state {\em practical} stability can be established \citep[cf.][Corollary~1]{raimondo2009}. 
Subsequently, we propose a scheme that establishes ISS and robust constraint satisfaction for the closed-loop system.

Because robust control barrier functions play a central role in our approach later, we provide a short introduction next.

\subsection{Robust Control Barrier Function}
\label{sec:CLF-CBF}
Invariance plays a central role for safety of dynamical systems and ensures trajectories that emanate from the inside of an invariant (safe) set stay in this safe set for all future times.  Denote $\mathcal A \subset \mathbb R^n$ as $T$-robust control-invariant for \eqref{eq:system} if and only if for all $x_0 \in \mathcal A$, there exists $\mathbf u \in \mathscr U_{[0,T]}$ such that $\xi(t, \mathbf u,\ww, 0, x_0) \in \mathcal A$ for all $\ww \in\mathcal{W}_{[0,T]}$ and $t\in [0,T]$. A set is robust control-invariant if it is $T$-robust control-invariant for all $T>0$. A sufficient condition to ensure \eqref{eq:system} stays in a safe set despite disturbance is given by robust control barrier functions (CBFs).
Let $h:\CS \to \mathbb R$ be a continuously differentiable function that defines a sublevel set $ \mathcal{H}=\{x \in \CS ~|~ h(x) \leq 0\}$. Then $h(\cdot)$ is a robust CBF for \eqref{eq:system} if there exists an extended class-$\mathcal K$ function {$\gamma$}, i.e., a strictly increasing function $\gamma:\mathbb{R}_{\geq 0}\to\mathbb{R}_{\geq 0}$ that satisfies $\gamma(0)=0$,  such that 
    \begin{flalign}
        \label{eq:CBFcondition}
       & \exists u \in \mathcal U, \quad \nabla h(x) f(x,u,w) \leq \gamma(-h(x)) &
    \end{flalign}
for all {$x \in \mathcal H$ and} $w \in \mathcal{W}$.
The existence of a robust CBF ensures that the sublevel set defined by $\mathcal H$ is robustly forward-invariant. Note that a (robust) CBF is frequently defined in terms of a superlevel set.

Both CLFs and CBFs are widely used to ensure stability and safety. In practice, they are often employed together. In such a case, the functions must be \emph{compatible}, i.e., for every $x \in \CS$ and $w \in \mathcal W$, there must exist a $u \in \mathcal U$ such that \eqref{eq: ISS_CLF1}, \eqref{eq: ISS_CLF2} and \eqref{eq:CBFcondition}  hold simultaneously.

\section{Methodology}
\label{sec:Methodology}
{In this section, we derive a real-time feasible NMPC scheme which ensures robust constraint satisfaction and {input-to-state} stability. We thereby build upon the recently introduced infinitesimal-horizon MPC scheme ($\partial$MPC), for which we provide a summary of the core results of \citep{olucak2025a} first. Afterwards, we provide the main theoretical results of this paper.}

\subsection{Infinitesimal-horizon MPC Scheme}
Recall, in the absence of disturbances, the conditions in Definition~\ref{def:terminal-robust} reduce to the nominal terminal conditions~\citep[Assumption 5.9]{grune2017}. If these conditions are met, the feasible set (denoted by $\mathcal{R}$) of the nominal MPC with horizon $T > 0$ is rendered forward invariant and the optimal value function (denoted by $V$) is non-increasing. However, for the MPC to be feasible, a potentially long horizon is  needed \citep[cf.][Remark 4.1]{chen1998}, which in turn results in a high computational effort. 

{The infinitesimal-horizon MPC scheme replaces the optimization over a (potentially very long) horizon by suitable approximations $\hat V: \mathbb R^n \to \mathbb R_{\geq 0}$ and $\hat {\mathcal R} \subset \mathcal X$ of the optimal value function and feasible set, respectively.}
Assuming a quadratic stage cost $L$, the $\partial$MPC problem for $\uuinfhat: \Rap\to\Uset$ is given by the QP~\citep{olucak2025a}
\begin{subequations}
\label{eq: infQP}
\begin{flalign}
    \hat\kappa_\partial: x \mapsto {\hat u} \in {} \operatorname*{arg\,min}_{u \in \mathcal{U}} \; L(x, u) +  {\nabla \Vap(x) f(x,u)}  \\
    \text{s.t. ${\nabla \hat h(x) f(x,u)} \leq \gamma( -\hat h{(x)} )$}
\end{flalign}
\end{subequations}
{where $\hat h: \mathbb R^n \to \mathbb R$ satisfies $\Rap = \{x\in\CS~|~\hat h(x) \leq 0\}$,}
for any given, sampled state $x \in \Rap$ and where $\gamma(\cdot)$ is an extended class-$\mathcal{K}$ function.

\Citet[Theorem~1]{olucak2025a} demonstrated that, if $(\hat V, \hat h)$ are a compatible CLF/CBF pair that satisfies the nominal terminal conditions, the feedback law \eqref{eq: infQP} preserves feasibility and asymptotic stability.
{Moreover, it was found that~\citep[Section~III]{olucak2025a} any compatible CBF/CLF pair $(\hat h, \hat V)$ safely stabilizes the nominal system, i.e., a compatible CBF/CLF pair is an approximation of the optimal value function and feasible set, respectively.}

\subsection{Robust infinitesimal-horizon MPC}
In the {spirit of the} $\partial$MPC {scheme}, this subsection provides sufficient conditions to ensure robust constraint satisfaction and {input-to-state stabilization by an infinitesimal-horizon MPC feedback}. 
Assume a quadratic stage cost $L$ and  let $(\hat V, \hat h)$ be {approximations of the optimal value function and feasible set of the min-max optimization problem \eqref{eq:mpc-problem}, respectively}. Subsequently, we assume disturbances of the form
\begin{flalign}
    \label{eq:disturbance-bound}
    \mathcal{W} := \{w \in \mathbb R^{n_w} \mid \lvert\lvert w \rvert\rvert_p \leq \bar w \}
\end{flalign} 
for some $p \in \{1, 2, \infty\}$, where $\bar w > 0$ denotes the peak disturbance.
The ISS-$\partial$MPC {aims to solve the infinitesimal-horizon min-max optimization problem}
\begin{subequations}
\begin{align}
    \min_{u \in \mathcal U} \max_{w \in \mathcal W} L(x,u) + \nabla \hat V(x) f(x,u,w) \\
    \text{s.t. $\nabla \hat h(x) f(x,u,w) \leq \gamma(-\hat h(x))$}
\end{align}
\end{subequations}
for any given, sampled state $x \in \Rap :=\{x\in\CS~|~\hat h(x) \leq 0\}$,  where $\gamma(\cdot)$ is an extended class-$\mathcal{K}$ function.
{Given the structure of $f$ in \eqref{eq:system}, the cost function is jointly linear in $u$ and $w$. Moreover, using the disturbance bound in \eqref{eq:disturbance-bound}, the effect of $w$ on the dissipation inequality can be upperbounded using the operator norm $\|\cdot\|_\text{op}$ induced by $\|\cdot\|_p$ to $\delta(x) = \|{\nabla \hat h(x) p(x)}\|_\text{op} \bar{w}$.}

{The infinitesimal-horizon problem hence reduces to} 
\begin{subequations}
    \label{eq: infQPRobust}
\begin{flalign}
    \hat\kappa_\partial: x \mapsto {\hat u} \in {} \operatorname*{arg\,min}_{u \in \mathcal{U}} \; L(x, u) +  {\nabla \Vap(x) [f_0(x) + g(x) u]}  \\
    \text{s.t. ${\nabla \hat h(x) [f_0(x) + g(x) u]} + \delta(x) \leq \gamma( -\hat h{(x)} )$}  \label{eq:mpc-infinitesimal-constraint_robust}
\end{flalign}
\end{subequations}
for any $x \in \hat {\mathcal R}$, {which is a quadratic program if the stage cost $L$ is quadratic in $u$.}

\begin{theorem}
    \label{thm:inf-mpc-robust}
    Suppose $(\hat V,\hat h)$ satisfies the stabilizing terminal conditions from Definition~\ref{def:terminal-robust}, that is, $\hat V$ is an ISS-CLF and $\hat h$ is a robust CBF. Let $\phi(t,\hat \kappa_\partial,\ww,t_0,x_0)$ denote the closed-loop solution under $\hat \kappa_\partial$ for any $\ww \in \mathscr W_{[t_0, t_1]}$.
    Then {there exist $\beta, \mu \in \mathcal{K}_\infty$ such that the following assertions hold} for any $x_0 \in \Rap$ and $\ww \in \mathscr W_{[t_0, t_1]}$:
    \begin{enumerate}
        \item {The robust $\partial$MPC problem \eqref{eq: infQPRobust} is feasible, that is,} 
        \begin{align*}
            \phi(t,\kappa_\partial,\ww,0,x_0) \in \Rap
        \end{align*}
        for all $t \geq 0$. 
        \item {The approximation $\hat V$ is an ISS-Lyapunov function for the closed loop under $\hat \kappa_\partial$, that is,}
        \begin{align*}
            \hat V(\phi(t, \hat \kappa_\partial, \ww, 0, x_0)) \leq \beta(\|x_0\|, t) + \mu(\|\ww\|_\infty)
        \end{align*}
        for all $t \geq 0$.
    \end{enumerate}
    {Consequently, the feedback $\hat \kappa_\partial$ is defined for all times $t \geq 0$, the closed-loop trajectories $\phi(t, \hat \kappa_\partial, \ww, 0, x_0)$ robustly satisfy the constraints $\mathcal X$, and the closed-loop system is input-to-state stable to disturbance $w \in \mathcal W$ on $\Rap$.}
\end{theorem}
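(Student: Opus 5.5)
The argument has two parts, mirroring the two assertions: first recursive feasibility via the robust CBF constraint, then ISS via the ISS-CLF property of $\hat V$ along closed-loop solutions.

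\textbf{Step 1: feasibility of \eqref{eq: infQPRobust} at every $x \in \Rap$.} Fix $x\in\Rap$. Since $\hat h$ is a robust CBF and compatible with $\hat V$, there is some $u\in\Uset$ with $\nabla\hat h(x)f(x,u,w)\le\gamma(-\hat h(x))$ for all $w\in\mathcal W$. For this $u$, take the worst case over $w$. By the disturbance bound \eqref{eq:disturbance-bound} and the definition of the operator norm,
\[
\max_{w\in\mathcal W}\nabla\hat h(x)p(x)w=\|\nabla\hat h(x)p(x)\|_\text{op}\,\bar w=\delta(x).
\]
Hence $\nabla\hat h(x)[f_0(x)+g(x)u]+\delta(x)\le\gamma(-\hat h(x))$, so the feasible set of the QP is nonempty. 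It is also compact, since $\Uset$ is compact and the constraint is affine in $u$. The cost is continuous, so a minimiser $\hat\kappa_\partial(x)$ exists.

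\textbf{Step 2: forward invariance of $\Rap$ (assertion 1).} Because $\delta(x)$ dominates the disturbance term, every realised $w\in\mathcal W$ gives
\[
\dot{\hat h}=\nabla\hat h(x)f(x,\hat\kappa_\partial(x),w)\le\gamma(-\hat h(x)).
\]
On the boundary, where $\hat h=0$, this reads $\dot{\hat h}\le 0$, so $f_{\hat\kappa_\partial}(x,w)$ lies in the tangent cone $\Tbdl{x}\Rap$. A Nagumo-type argument then gives invariance; equivalently, apply the comparison lemma to $\dot y\le\gamma(-y)$ with $y(0)\le0$ to get $\hat h(\phi(t))\le0$ for all $t\ge0$. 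This also shows the feedback is defined for all times and $\mathcal X$ is respected, since $\Rap\subseteq\CS$.

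\textbf{The main obstacle: existence of closed-loop solutions.} Caratheodory solutions need the feedback to be at least measurable, and the argmin map of a parametric QP need not be continuous. I would first try to handle this by assuming strict convexity ($R\succ0$ in $L$) together with a constraint qualification, for example strict feasibility of the barrier constraint. Then standard sensitivity results for parametric QPs make $\hat\kappa_\partial$ locally Lipschitz, and solutions exist and are unique. Failing that, I would fall back on a sample-and-hold interpretation.

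\textbf{Step 3: ISS (assertion 2).} By the ISS-CLF property \eqref{eq:ISS2}, some admissible $\bar u$ satisfies
\[
L(x,\bar u)+\nabla\hat V(x)f(x,\bar u,w)\le-\alpha(\|x\|)+\sigma(\|w\|).
\]
I would pick $\bar u$ to also be feasible for the CBF constraint, using compatibility. Optimality of $\hat u$ over the feasible set, together with the additive structure in $w$, which enters identically for any $u$, gives
\[
\nabla\hat V(x)f(x,\hat u,w)\le L(x,\hat u)+\nabla\hat V(x)f(x,\hat u,w)\le L(x,\bar u)+\nabla\hat V(x)f(x,\bar u,w).
\]
The first inequality uses $L\ge0$. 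This yields the dissipation inequality
\[
\dot{\hat V}\le-\alpha(\|x\|)+\sigma(\|w\|)\le-\alpha(\alpha_2^{-1}(\hat V))+\sigma(\|w\|).
\]
This subtly requires the minimiser of the cost with the $w$-term dropped to also be optimal with the $w$-term included, which holds because $\nabla\hat V\,p\,w$ does not depend on $u$.

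To conclude, apply the standard ISS comparison lemma (Sontag) to the scalar inequality above. This gives $\hat V(\phi(t))\le\tilde\beta(\hat V(x_0),t)+\mu(\|\ww\|_\infty)$. Setting $\beta(r,t)=\tilde\beta(\alpha_2(r),t)$ gives assertion 2, and the lower bound $\alpha_1$ then turns it into an ISS estimate on $\|\phi\|$.
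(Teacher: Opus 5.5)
Your proposal is correct and follows the same route as the paper: invariance of $\Rap$ because the tightening $\delta(x)$ dominates $\nabla\hat h(x)p(x)w$ for every $w\in\mathcal W$, and ISS because the QP minimiser does at least as well in $L+\nabla\hat V f$ as the terminal feedback of \eqref{eq:ISS2}, since the minimiser is independent of $w$; this yields the dissipation inequality. You also make explicit several points the paper leaves implicit: nonemptiness of the QP, the need for the competitor input to satisfy the tightened barrier constraint itself, existence of closed-loop solutions, and the comparison-lemma step from the dissipation inequality to the $\mathcal{KL}$ bound. However, your sample-and-hold fallback would not rescue assertion 1, because the tightened barrier constraint is then enforced only at sampling instants and invariance can fail in between.
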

\begin{proof}
The first {assertion} (feasibility) is an immediate consequence of $\hat \kappa_\partial(x) = \hat u$ satisfying \eqref{eq:mpc-infinitesimal-constraint_robust}, since ${\nabla \hat h(x) p(x)w} \leq \|{\nabla \hat h(x) p(x)}\|_2 \bar{w} = \delta(x)$ ensures robust constraint satisfaction for the worst-case realization of the disturbance. 
For the second {assertion} (ISS), recall that $\hat V(x)$ {satisfies \eqref{eq:ISS1} and \eqref{eq:ISS2}} {for some} $\kappa(x)$ for all $w \in \mathcal{W}$ and for all $x \in \Rap$ according to Definition~\ref{def:terminal-robust}. Thus,
we have 
\begin{multline*}
    L(x, \hat\kappa_\partial(x)) + {\nabla \Vap(x) f(x,\hat\kappa_\partial(x),{w})} \\ 
    \leq L(x, \kappa(x)) +  {\nabla \Vap(x) f(x,\kappa(x),{w})} \\
    \leq  - \alpha(\|x\|) + \sigma(\|w\|)
\end{multline*}
{where the first inequality follows from the fact that $\hat \kappa_\partial(x) = \hat u$ minimizes \eqref{eq: infQPRobust} independently of $w$. Reordering the inequality shows that $\hat V$ is an ISS-Lyapunov function for the closed loop of \eqref{eq:system} and $\hat\kappa(x)$. Noting that $\hat {\mathcal R} \subset \mathcal X$}
concludes the proof. \qed
\end{proof}

{While} the approach of \citep{raimondo2009} only ensures input-to-state practical stability of the closed loop under min-max MPC feedback, our approach provides input-to-state stability and recovers asymptotic stability in case that disturbances are absent or vanish. This is  because the dissipation rate of the ISS-CLF is minimized only for possible, current disturbances instead of future disturbance signals. {Note {that}, unlike \textit{standard} robust CBF-CLF formulations \citep[e.g.,][]{jankovic2018}, our formulation can incorporate stage cost, no slack variable is needed to ensure feasibility of the underlying QP and input-to-state stabilization is realized via a terminal penalty instead of a constraint.}

Finding a compatible CBF/CLF pair that meets  Definition~\ref{def:terminal-robust} might be challenging. It might be accomplished with Hamilton-Jacobi reachability or sum-of-squares (SOS) programming. In this paper, we adapt the SOS method of \citet{olucak2025a} in the next section.

\section{Polynomial Synthesis}
\label{sec: SOSApprox}
In this section, we leverage SOS programming to compute polynomial approximations $\hat V$ and $\hat {\mathcal R}$. Hence, a short introduction to SOS is given first, followed by {a nonconvex SOS optimization problem for the} synthesis.

\subsection{Sum-of-Squares Programming}
Denote the set of polynomials in $x$ with real coefficients up to degree $d$ by $\mathbb R_d[x]$. 
A polynomial $p \in \mathbb R_{2d}[x]$ is a {\em sum-of-squares polynomial} ($p \in \Sigma_{2d}[x]$) if and only if there exist $m \in \mathbb N$ and $p_1, \ldots, p_m \in \mathbb R_d[x]$ such that $p = \sum_{i=1}^m (p_i)^2$. If $p \in  \Sigma_{2d}[x]$, then $p(x) \geq 0 $ for all $x \in \mathbb{R}^n$. Convex SOS program{s} can be {reduced to} and solved as semidefinite programs (SDPs) \citep{parrilo2003}. For synthesis, one encounters nonconvex sum-of-squares problems of the form
\begin{flalign}
    \label{eq:sos-nonlinear}
    \min_{\xi \in \mathbb R_{2d}[x]^n} f(\xi) \quad \text{s.t. $\xi \in \Sigma_{2d}[x]^n$ and $g(\xi) \in \Sigma_{2d'}[x]^m$}
\end{flalign}
where $f: \mathbb R_{2d}[x]^n \to \mathbb R$ and $g: \mathbb R_{2d}[x]^n \to \mathbb R_{2d'}[x]^m$ are differentiable functionals. Nonconvex SOS problems need iterative schemes such as coordinate-descent~\citep{chakraborty2011}, bisections~\citep{seiler2010a}, or sequential SOS~\citep{cunis2023}.

\subsection{Synthesis Problem}
In this subsection, we provide a nonconvex SOS problem to synthesize a compatible pair of robust CBF and ISS-CLF. A full derivation using set-inclusion constraints and the generalized $\mathcal{S}$-procedure~\citep{wang2023} is given in the supplementary material. 
We assume that the system dynamics $f$ in \eqref{eq:system} and the stage cost $L$ in \eqref{eq:mpc-problem} are polynomial functions, that the state constraint set (or a tight inner approximation) is a basic semialgebraic set
\begin{flalign*}
    \CS = \{x \in \mathbb R^n ~|~ \ell(x) \leq 0 \}
\end{flalign*}
for some $\ell \in \mathbb R[x]$, and that the input constraint set is the polyhedral set
\begin{flalign*}
    \Uset = \{ u \in \mathbb R^m ~|~ H_\Uset u \leq \mathbf 1_p \}
\end{flalign*}
where $H_\Uset \in \mathbb R^{p \times m}$, $p \in \mathbb N$, and $\mathbf 1_p$ denotes the vector $(1,\ldots,1) \in \mathbb R^p$.  
Denote the feasible set approximate as
\begin{align*}
    \mathcal{\hat{R}}:= \{x \in \mathbb R^n \mid \hat h(x) \leq \beta \} \subseteq \mathcal X
\end{align*}
for some level $\beta > 0$ and {problem} $\hat h \in \mathbb R[x]$.

We are interested in synthesizing a robust CBF $\hat h \in \mathbb R_{2d}[x]$,  and an ISS-CLF $\hat V \in \mathbb R_{2d}[x]$ that are simultaneously compatible. 
The state dependence is omitted if clear from context. Since a control law $\kappa(x)$ is synthesized, we can write $f_\kappa(x,w) = f(x,\kappa(x), w)$ {as well as}  $\dot V_\kappa(x, w) = \nabla \hat V(x) f_\kappa(x, w)$ and {${\dot{\hat{h}}_\kappa(x, w) = {\nabla \hat h(x) f_\kappa(x,w)}}$}.  For notational convenience, we introduce
\begin{align*}
 {\tau(V,\kappa) := \dot V(x,w)  + L(x,\kappa(x)) - \sigma(\lvert\lvert w  \rvert\rvert)   + \alpha(\lvert\lvert x  \rvert\rvert ) \leq 0 }
\end{align*}
and note that to synthesize the class-$\mathcal{K}_\infty$ functions, we make use of the method proposed by~\citep[Section 4.1]{ichihara2012a} and {realize} these as univariate real even polynomials in ${r} \in \mathbb R$.  {{The last step of} the SOS synthesis {is} the $p$-norm bound \eqref{eq:disturbance-bound} of the disturbances. We restrict the {following} formulation to the  2-norm, which can be encoded {using $\|w\|_2^2 = {\langle w, w \rangle}$}. 
\begin{remark}
{Other} $p$-norms can be incorporated into the SOS program \eqref{eq: SOS_Syn_robMPC} {as well}. The 1-norm {and the} $\infty$-norm can be {included as} intersections of element-wise inequalities or {replaced} by polynomial over-approximations. Note {that} this might also effect the dimensions of the SOS multiplier in \eqref{eq: SOS_Syn_robMPC}.
\end{remark}}

The final  SOS program for the synthesis reads
\begin{subequations}
    \label{eq: SOS_Syn_robMPC}
\begin{flalign}
    &\min_{\substack{s_1, s_4 \in \Sigma[x], s_3 \in\Sigma[x]^p, \\  s_2, s_{2,w}, s_5, s_{5,w} \in \Sigma[x,w], \\
    \hat h , \hat V \in \mathbb{R}[x], \kappa \in \mathbb{R}[x]^m, \\\alpha_1,\alpha_2, \alpha, \sigma \in \mathbb R[s]}}    J(\hat V, \hat h, \kappa)   \\
    &\text{subject to} \nonumber \\
    & \hat V - \alpha_1(\lvert\lvert x \rvert\rvert)                                  \in\Sigma[x]                          \label{eq: strictPos1}\\
    &   \alpha_2(\lvert\lvert x \rvert\rvert) -\hat V                                                         \in\Sigma[x]  \label{eq: strictPos2}\\
    &   s_1 (\hat h-\beta) - \ell                                                                              \in\Sigma[x] \\
    & (s_2-a) (\hat h-\beta) -  \dot{\hat{h}} + s_{2,w}  (\|w\|{{_2^2}} - \bar{w}{^2})                         \in\Sigma[x,w] \label{eq: CBFSOS}\\
    &  s_3 (\hat h-\beta) - (H_\mathcal{U} \hat\kappa - \mathbf 1_p)                                          \in\Sigma[x]^p \label{eq: lowerContSOS} \\
    &  s_4 (\hat h-\beta) - {\tau(V,\kappa)}   + s_{4,w}(\|w\|{{_2^2}} - \bar{w}{^2})                                           \in\Sigma[x,w] \label{eq: terminalPenSOS}\\
    &  r \frac{\mathrm{d} \zeta(r)}{\mathrm{d} r}              \in\Sigma[r]  \quad\text{ for }\zeta \in \{\alpha_1,\alpha_2, \alpha, \sigma\}  \label{eq: Kinf1}
\end{flalign}
\end{subequations}
 where in~\eqref{eq: CBFSOS} the class-$\mathcal{K}$ function is of the form $\gamma(r) = a r$ with $a > 0$. Condition~\eqref{eq: Kinf1} ensures that $\zeta(s) $ is strictly increasing for $ s \geq 0$, $\zeta(s_1) < \zeta(s_2)$ for $0 \leq s_1  < s_2$. 
{The solution $(\hat V, \hat h)$ of the nonlinear SOS program \eqref{eq: SOS_Syn_robMPC} satisfies the assumptions in Theorem \ref{thm:inf-mpc-robust} and hence, the closed-loop system is input-to-state stable with respect to external disturbances {and ensures robust constraint satisfaction}.}

\section{Numerical Results}
This section provides a numerical comparison {of the robust $\partial$MPC scheme (ISS-$\partial$MPC)} to zoRo-RTI~\citep{frey2024}, the polynomial control law $\tilde \kappa$ from~\eqref{eq: SOS_Syn_robMPC}, and a standard controller~\citep[Subsection~5.2.1]{fichter2023} for constrained spacecraft control {along with} a detailed analysis of the proposed approach.
For the computation as described in Section~\ref{sec: SOSApprox}, we make use of {\rm Ca$\Sigma$oS}~\citep{cunis2025a} v1.0.0 with {\rm MOSEK}~\citep{mosek} v11.1.5 as the underlying SDP solver. 
For zoRo-RTI, we make use of \texttt{acados}~\citep{verschueren2022} v0.5.3 and \texttt{osqp}~\citep{stellato2020} with partial condensing. Whereas \texttt{acados} provides a generic interface for all kinds of optimal control problems, a direct native QP interface is not available. Therefore, we make use of {\rm CasADi}~\citep{andersson2018} v3.6.7 to setup the ISS-$\partial$MPC QP with  \texttt{osqp}.
The computation time (wall-time) is provided by the CasADi solver statistics for the ISS-$\partial$MPC {scheme} and we measure the wall-time for zoRo-RTI via MATLAB's \texttt{tic-toc} command. Source code for the synthesis and the simulations along with details about the zoRO-RTI implementation\footnote{Our implementation builds upon \texttt{zoro\_example.m} provided in \texttt{acados}.}  are
provided in the supplementary material~\citep{olucak2026b}. 
The results in this paper were computed on a  computer with Windows 11 and MATLAB 2026a running on an AMD Ryzen 9 5950X 16-Core Processor with 3.40 GHz and 128 GB RAM. 

\subsection{Comparative Study}
We consider the problem of rate damping for a spacecraft from large angular rates. Such a situation can occur after separation from a launcher. 
The state vector {$x \in \mathbb R^3$ corresponds to} the angular rates.
The system dynamics {in continuous time are given by a vector field} $f: \mathbb R^3 \times \mathbb R^3 \times \mathbb R^3 \rightarrow \mathbb R^3$ {with}
\begin{subequations}
\label{eq:spacecraft-dynamics}
\begin{flalign}
\dot{x} = f(x,u,w) =   {-J^{-1} C(x) J x} + J^{-1} (u+w)  
\end{flalign}
where $J \in \mathbb R^{3 \times 3}$ is the inertia tensor, $u \in \mathbb R^3$ and $w \in \mathbb R^3$  are the control and disturbance torques respectively, and {$C(x)$} is the cross-product matrix of the angular rates. The inertia tensor  is assumed to be $J = \text{diag}(31, 77, 78)$ \unit{\kilogram\meter{^2}}. We consider the state and control constraints and disturbance torques \begin{flalign}
    &{x} \in [-0.5, 0.5] \times  [-0.2, 0.2] \times [-0.2, 0.2]~\si{ \unit{\deg\per\second}} \label{eq:rateCon}\\
    &\| u \|_\infty \leq \SI{0.012}{\newton\meter} \\
    & \|w\|_2 \leq \SI{1.2e-3}{\newton\meter} \label{eq: compNoiseMag}
\end{flalign}
\end{subequations}
where \eqref{eq: compNoiseMag} corresponds to ten percent of the available actuation capabilities. {Note that in \eqref{eq: compNoiseMag} we assume disturbances of the form
$\mathcal{W} := \{w \in \mathbb R^{n_w} \mid \lvert\lvert w \rvert\rvert_2 \leq \bar w \}$.}

\subsubsection{Preparation}
The simulation time is set to \SI{250}{\second} and a sampling frequency of \SI{10}{Hz}.  We draw 20 randomly sampled initial angular rates from a shifted and scaled uniform distribution such that they lie in the precomputed feasible set of the ISS-$\partial$MPC, that is, $\hat h(x_\mathrm{sample}) \leq 0$.
We assume $Q = \mathrm{diag}(10,10,10)$ and $R = \mathrm{diag}(0.1,0.1,0.1)$ for both, the ISS-$\partial$MPC and zoRo-RTI. For zoRo-RTI, we determine a terminal penalty for the linearized system $F: x \mapsto x^\top S x$, where $S \in \mathbb R^{3 \times 3}$ is the solution of the Riccati for $Q$ and $R$. We find the maximum stable level set $\gamma  =  4.4\cdot 10^{-5}$ for the nonlinear constrained system using a SOS program via bisection and define $\TS = \{ x \in \mathbb R^3 \, | \, F(x) \leq \gamma \}$. A prediction horizon of $T = \SI{20}{\second}$ was heuristically determined such that the first step is feasible and discretization is chosen such that $\mathrm{d}t = \SI{0.1}{\second}$.
The standard rate controller reads {$K(x) = - K_c x$}, with {$K_c = c J$} where $c >0 $ is a tuning parameter~\citep[Subsection~5.2.1]{fichter2023}. This control technique is not a robust scheme, but a common control technique for the nonlinear dynamics with some inherent robustness. Furthermore, we add the polynomial control law from the synthesis step to the comparison. 
{The disturbance realizations are drawn uniformly such that each sample $w_k \in \mathcal{W}$.}

\subsubsection{Results}
We compare the four approaches in terms of mean {and maximum} computation time, precomputation time, root mean square (RMS)  value  of the angular rate error and integral stage cost of the closed-loop trajectories. The results are presented in the aggregated Table~\ref{tab:comparsionRateControl} over the 20 samples, marking the best in each category \textbf{bold}. The precomputation times for all methods is just a few seconds. For the two analytical control laws we do not provide computation times. The proposed {ISS-$\partial$MPC} approach is significantly faster compared to zoro-RTI, while both have the almost same average stage cost, with zoro-RTI slightly better. A ratio of the mean RMS value between the three other approaches and the ISS-$\partial$MPC (RMS/$\text{RMS}_{\mathrm{ISS}-\partial \mathrm{MPC}}$) is provided for $ t \in [\SI{100}{\second},\SI{250}{\second}]$, indicating that the other methods have larger errors on average. For the first run, we provide the RMS over time in Fig.~\ref{fig:ComparisonRObustMPC} for a better visual impression.

\begin{figure}
\setlength{\figH}{3.8cm}
\setlength{\figW}{7cm}
    \centering
    \includegraphics{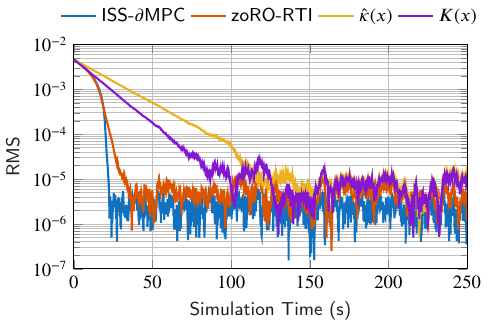}
    \caption{RMS of the angular rate for the first run. Among the four methods, the proposed ISS-MPC has the smallest RMS.}
    \label{fig:ComparisonRObustMPC}
\end{figure}

\begin{table}
    \centering
    \caption{Aggregated comparison of the different controllers in terms of precomputation time $t_\text{pre}$, mean (worst-case) computation time $t_\text{comp}$, RMS ratio  and integral stage cost.}
    \label{tab:comparsionRateControl}
    \begin{tabular}{lccrlc}
        \hline
    \hline
    Method & $t_\text{pre}$ [s]&  {$t_\text{comp}$ [ms]} & RMS ratio & $\int L(\cdot) \mathrm dt$ \\
    \hline
     ISS-$\partial$MPC           &  2.65 & $\mathbf{0.015}$ (0.064)  &  $\mathbf{1}$    & $0.0118$\\
     zoRo-RTI                &   $\mathbf{1.9}$    &  6.26 (289.98) &   2.03& $\mathbf{0.0111}$\\
     $\hat\kappa(\cdot)$     & 2.65  &  {---}   & 5.64 &  0.0211\\
     $K(\omega)$             &  {---} &  {---} & 3.73 & 0.015\\
         \hline
         \hline
    \end{tabular}
\end{table}

\subsection{Constrained three-axis attitude control}
In the second case, we consider constrained large angle three axis attitude control for a spacecraft to demonstrate that the sufficient condition for ISS and robust constraint satisfaction are met.  We augment the state to {$x = (x_1,x_2)$}, where {$x_1 \in \mathbb R^3$ corresponds again to the angular rates and $x_2 \in \mathbb R^3$ corresponds to} the modified Rodrigues parameters. The system dynamics read
\begin{flalign}
\dot{x} = f(x,u,w) = \begin{bmatrix}  {-J^{-1} C(x_1) J x_2} + J^{-1}(u + w)  \\  {\frac{1}{4} B(x_2) x_1} \end{bmatrix} 
\end{flalign}
with {$B(x_2) = (1-x_2^\top x_2^{}) I_{3\times 3} + 2 C(x_2) + 2 x_2^{} x_2^\top$},
where {$C(x_2)$ is again} the cross-product matrix. An attitude constraint {$x_2^\top x_2^{} \leq 1$} is considered in addition to \eqref{eq:rateCon}-\eqref{eq: compNoiseMag}. 

\subsubsection{Preparation}
We consider rest-to-rest profiles (angular rates zero) and draw 20 random initial attitudes from a shifted and scaled uniform distribution such that they lie in the precomputed feasible set of the ISS-$\partial$MPC, that is, $\hat h(x_\mathrm{sample}) \leq 0$ and set the simulation for $\SI{3600}{\second}$ to investigate the behavior for larger $t$. We choose $Q = \mathrm{diag}(10,10,10,1,1,1)$ and $R = \mathrm{diag}(200,200,200)$. The precomputation time with a reasonable initial guess for the nonconvex SOS problem \eqref{eq: SOS_Syn_robMPC} is $\SI{10}{\second}$. {The disturbance realizations are drawn uniformly such that each sample $w_k \in \mathcal{W}$.}

\subsubsection{Results}
In Figure~\ref{fig: satAttEuler}, we provide the Euler angles {computed from $x_2$} over time for each of the 20 runs. The results show that the state converges to the origin even for large initial Euler angles, indicating a large region-of-attraction. In Figure~\ref{fig: satAttControl}, we provide an excerpt of the control inputs for the first 30 seconds of the closed-loop trajectories. The control inputs start fairly aggressive but stay within the bounds for the whole time. As indicated in Figure~\ref{fig: satAttCBF}, the CBF evaluated along the closed-loop trajectory is non-positive, indicating that the state constraints are satisfied. The zoom in the right  of Figure~\ref{fig: satAttCBF} shows the CBF values get close to the boundary of the feasible set, but does not violate it. {In Figure~\ref{fig: satAttLyap}, the {ISS-}Lyapunov function evaluated along the closed-loop trajectories is depicted, which {converges (close) to zero}.} The mean and worst computation time over all twenty runs are $\SI{0.015}{\ms}$ and $\SI{0.53}{\ms}$, respectively. 

\begin{figure}
\setlength{\figH}{4cm}
\setlength{\figW}{7cm}
    \centering
     \includegraphics{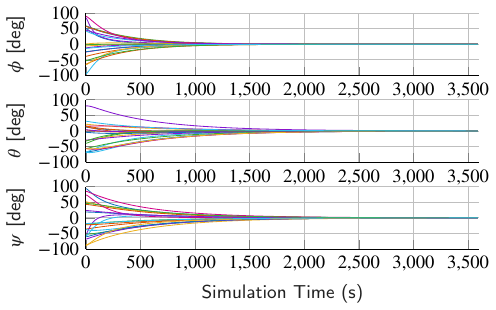}
    \caption{Euler angles for the twenty runs for varying initial conditions for large angle slews.}
    \label{fig: satAttEuler}
\end{figure}

\begin{figure}
\setlength{\figH}{4.5cm}
\setlength{\figW}{7cm}
    \centering
    \includegraphics{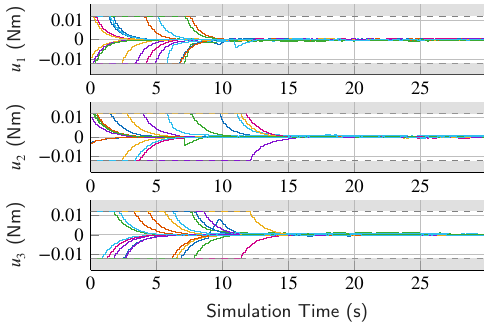}
    \caption{Excerpt of control inputs for the first 30 seconds of the closed-loop trajectory for the twenty runs. The control input stays within the bounds, indicating that the input constraints are satisfied.}
    \label{fig: satAttControl}
\end{figure}

\begin{figure}
\setlength{\figH}{3.8cm}
\setlength{\figW}{7cm}
    \centering
    \includegraphics{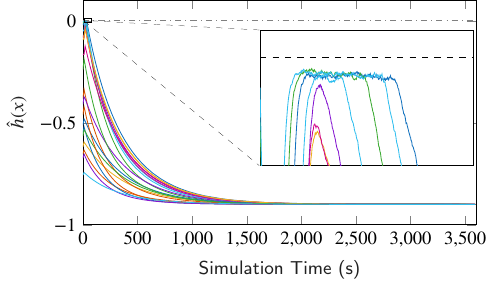}
    \caption{CBF evaluated along the closed-loop trajectory for the twenty runs. The CBF is non-positive, indicating that the state constraints are satisfied. The zoom in the right figure shows the CBF values get close to the boundary of the feasible set, but do not violate it.}
    \label{fig: satAttCBF}
\end{figure}


\begin{figure}
\setlength{\figH}{3.8cm}
\setlength{\figW}{7cm}
    \centering
    \includegraphics{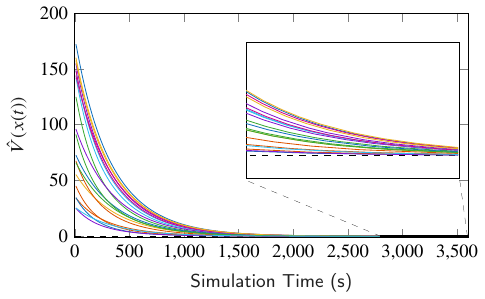}
    \caption{{ISS-}Lyapunov function evaluated along the closed-loop trajectory for the twenty runs. The ISS-CLF is nonnegative, indicating that the state converges to {a neighbourhood of} the origin.}
    \label{fig: satAttLyap}
\end{figure}

\subsection{Discussion}
The proposed ISS-$\partial$MPC guarantees robust constraint satisfaction and input-to-state stability by a precomputed polynomial approximations for the robust CBF and ISS-CLF. Its demonstrated low computational footprint makes it a suitable candidate for controlling applications with severely limited computing resources. 
The solution quality (size of feasible set, control performance) highly depends on the polynomial degrees of
the approximations. Compared to other robust NMPC scheme, the
offline tuning effort is higher because the polynomial structure
(degree, monomials) must be selected in addition to the state
cost and its weights. Selecting the extended class-$\mathcal{K}$ functions in the CBF
affects the performance is an ongoing research activity \citep[see, e.g.,][]{parwana2025}. Similar, {using the} method of \citep{ichihara2012a} {for synthesis of class-$\mathcal K_\infty$ functions} might lead to some conservatism. The low computational effort can be primarily traced back to the offline precomputation step. However, the precomputed robust terminal ingredients are only valid for one equilibrium point. Note that this is also the case for (robust) NMPC with
stabilizing terminal conditions. To alleviate this problem, one could consider reference-dependent terminal conditions \citep{cotorruelo2021a}.
The proposed approach is executed in a quasi-continuous manner, i.e., a continuous time optimization problem is solved in a sampled data system and in between sampling instants the controller is kept fix. This might violate the invariance property and is known as inter-sampling effect~\citep{breeden2022}. {However,} this effect was not {observed} in the numerical examples. {Another interesting direction lies in the development of disturbance observers in combination with ISS-$\partial$NMPC formulations parameterized in the disturbance, to reduce the conservatism of the assumption of worst-case disturbances \citep[cf.][]{jankovic2018}. {The infinitesimal-horizon approach makes this strategy} possible, because current disturbances instead of future disturbance signals are used.}

\section{Conclusions}
An ISS-MPC scheme with a small computational footprint is proposed based on a compatible pair of robust CBF and ISS-CLF. This computational lightweight approximate NMPC scheme comes with theoretical guarantees regarding robust constraint satisfaction and input-to-state stability. Its effectiveness and also the theoretical guarantees are demonstrated in numerical examples. Unlike other approaches that only guarantee {practical stability in case of vanishing or decaying disturbances}, the proposed {ISS-$\partial$MPC scheme} recovers asymptotic stability.

Future work includes investigations to tune of the extended class-$\mathcal{K}$ functions, investigating the use of  (robust) reference-dependent terminal conditions  and {mitigations of} the inter-sampling effect.

\printcredits
\bibliographystyle{cas-model2-names}
\bibliography{literature}

\end{document}